\pgfplotsset{compat=1.14}
\newtheorem{theorem}{Theorem}
\DeclarePairedDelimiter\ceil{\lceil}{\rceil}
\newcommand*\bigcdot{\mathpalette\bigcdot@{.5}}
\newcommand*\bigcdot@[2]{\mathbin{\vcenter{\hbox{\scalebox{#2}{$\m@th#1\bullet$}}}}}
\newcommand{\llbracket}{{[\![}}
\newcommand{\rrbracket}{{]\!]}}
\title{Synthesizing quantum circuits via numerical optimization}
\author[1,3]{Timothée Goubault de Brugière}
\author[1]{Marc Baboulin}
\author[2]{Benoît Valiron}
\author[3]{Cyril~Allouche}
\date{}
\affil[1]{Université Paris-Saclay, CNRS,                  Laboratoire de recherche en informatique, 91405, Orsay, France}
\affil[2]{Université Paris-Saclay, CNRS, CentraleSupélec, Laboratoire de Recherche en Informatique, 91405, Orsay, France}
\affil[3]{Atos Quantum Lab, Les Clayes-sous-Bois, France}
\begin{document}

\maketitle

\begin{abstract}
We provide a simple framework for the synthesis of quantum circuits based on a numerical optimization algorithm. This algorithm is used in the context of the trapped-ions technology. We derive theoretical lower bounds for the number of quantum gates required to implement any quantum algorithm. Then we present numerical experiments with random quantum operators where we compute the optimal parameters of the circuits and we illustrate the correctness of the theoretical lower bounds. We finally discuss the scalability of the method with the number of qubits.
\end{abstract}

\section{Introduction}

Quantum computing, introduced and theorized about 30 years ago, is
still in its infancy: current technological devices can only handle a
few qubits. This new paradigm of computation however shows great
promises, with potential applications ranging from high-performance
computing~\cite{harrow2009quantum} to machine learning and big data~\cite{kerenidis2017recommendation}.
Quantum algorithms are usually described via quantum circuits, i.e., series
of elementary operations in line with the technological specificity of the
hardware. The mathematical formalism for quantum computation is the
theory of (finite dimensional) Hilbert spaces: a quantum circuit is
represented as a unitary operator \cite{nielsen2011quantum},
independently from the machine support on which the algorithm will be
executed. Establishing a link between the unitary operators described as
matrices and the unitary operators described as circuits is therefore
essential, if only to better understand how to design new
algorithms. Obtaining the matrix from the circuit can be done either
by running the circuit on a quantum hardware (plus some tomography) or
via a simulation on a classical computer
\cite{bravyi2018simulation,PhysRevLett.91.147902}. Obtaining the
circuit from the matrix is more complicated and fits into a more
general problematic called {\it quantum compilation} i.e., how to
translate a quantum operator described in an unknown form for the
targeted hardware into a sequence of elementary instructions
understandable by the machine. Therefore converting a matrix into a
circuit is in all circumstances a compilation step because no hardware
can directly accept a matrix as input: this particular task of
compilation is called {\it quantum circuit synthesis} and is the
central concern of this work.
This process must be as automated and optimized as possible in terms of conventional or quantum resources. This task is inevitably hard in the general case: the classical memory required to store the unitary matrix is exponential in the number of qubits and it has been shown that an exponential number of quantum gates is necessary to implement almost every operator~\cite{knill1995approximation}.
As a result both classical and quantum complexity follow an
exponential growth and any methods will be quickly limited by the size
of the problem. Yet being able to synthesize quickly and/or optimally
any quantum operator or quantum state on a few qubits is a crucial
challenge that could be crucial for some applications. In the NISQC (Noisy Intermediate Scale Quantum Computer)
era \cite{NISQC}, compressing circuits as much as possible will be crucial. And in
general, such a procedure can be
integrated as a subtask in a peep-hole optimization framework.

\paragraph{Our contribution.}
 The quantum circuits addressed in this paper correspond to the specific technology of trapped-ions quantum computer which requires using specific quantum gates. This technology holds great promise to cross the passage to scale: quantum circuits with trapped-ions technology have great fidelities and the qubits have a long decoherence time. In other words the noise in the results is low and long time computations can be performed before the system interferes with the environment and loses its information. Moreover the particular architecture of trapped-ions quantum circuits makes the problem simpler for numerical optimization because as we will see it only involves the optimization of continuous real parameters. We provide a simple optimization framework and use it to synthesize generic quantum operators. More specifically, we derive a lower bound on the number of entangling gates necessary to implement any quantum algorithm and propose numerical experiments that confirm the correctness of this bound.

\medskip
\noindent
This preprint has been submitted to ICCS 2019.

 \paragraph{Plan of the paper.}
 In Section \ref{background}, we give the main definitions related to
 quantum circuits and quantum gates and we summarize the state of the art in quantum circuit synthesis.
 Then in Section \ref{lower} we develop in more details the modeling of a circuit as a parameterized topology and the question of the minimum-sized topology necessary to implement any operator. In Section \ref{framework} we introduce a generic optimization framework formalizing the circuit synthesis as a classical optimization problem. In Section \ref{experiments} we apply this framework to optimally synthesize generic unitary matrices with the trapped-ions natural set of gates. We also discuss the scalability of this method and how we can naturally trade the optimality of the final quantum circuit for shorter computational time. We conclude in Section \ref{conclusion}.

 \paragraph{Notations.}
 Throughout this paper we will use the following
 notations. $\mathcal{U}(n)$ denotes the set of unitary matrices of
 size $n$, i.e.
 $\mathcal{U}(n) = \{ M \in \mathbb{C}^{n \times n} \; | \;
 M^{\dag}M = I \}$, where $I$ is the identity matrix and $M^{\dag}$ is
 the conjugate transpose of the matrix $M$. The special unitary group $\mathcal{SU}(n)$ is the group of $n \times n$ unitary matrices with determinant 1.
 $\|x\|=\sqrt{x^{\dag}x}$
 refers to the Euclidean norm of a vector $x \in \mathbb{C}^{n}$ and
 $A \otimes B$ denotes the Kronecker product~\cite{KRONECKER} of two
 matrices $A$ and $B$.

\section{Background and state of the art}\label{background}

\subsection{Main notions in quantum circuits}
The basic unit of information in quantum information is the quantum bit and is formalized as a complex linear superposition of two basis states, usually called the state '$0$' and the state '$1$'. More generally when manipulating a set of $n$ qubits we modify a quantum state 
\[ \ket{\psi} = \sum_{\mathbf{b} \in \{0,1\}^n} \alpha_{\mathbf{b}} \ket{\mathbf{b}} \] 
which is a normalized complex linear superposition of all the possible n-bitstring values. So a quantum state on $n$ qubits can be written as a unit vector in $\mathbb{C}^{2^n}$ and by the laws of quantum physics any operation on this quantum state can be represented as a left-multiplication of its vector representation by a unitary matrix $U \in \mathcal{U}(2^n)$, if we except the measurement operation.

Quantum algorithms are described via quantum circuits---the quantum
analog of logical circuits. An example is given in Figure
\ref{circuit1}. Each horizontal wire corresponds to one qubit and
quantum gates --represented as boxes, or vertical apparatus---are
sequentially applied from left to right to different subsets of qubits
resulting in a more complex unitary operator. Matrix multiplication
$BA$ of two operators $A$ and $B$ enables to apply sequentially the
operator $A$ then $B$. To compose two operators $A$ and $B$ acting on
different qubits we use the Kronecker product $\otimes$.

\begin{figure}
\newcommand{\gate}[4]{\draw[fill=white] (#1-#3,#2-#3) rectangle (#1+#3,#2+#3); \node at (#1,#2) {#4} }
\centering
\begin{tikzpicture}[x=3ex,y=-3ex]
 \draw (0,1) -- (12,1) ;
 \draw (0,2.5) -- (12,2.5) ;
 \draw (0,4) -- (12,4) ;
 \gate{1}{1}{0.6}{$R_z$} ;
 \draw[fill=black] (3,1) circle (0.1) ;
 \draw (3,1) -- (3,2.8) ;
 \draw (3,2.5) circle (0.3) ; 
 \gate{5}{4}{0.6}{$R_x$} ;
 \gate{7}{4}{0.6}{$R_y$} ;
 \gate{9}{4}{0.6}{$R_z$} ;
 \draw[fill=black] (11,4) circle (0.1) ;
 \draw (11,4) -- (11,2.2) ;
 \draw (11,2.5) circle (0.3) ;
\end{tikzpicture}
\caption{Example of quantum circuit}
\label{circuit1}
\end{figure}

For a given technology not all quantum gates are realizable and only a
few of them are directly implementable: the so-called elementary
gates. For example, we have the Hadamard gate
$H = \frac{1}{\sqrt{2}} \begin{pmatrix} 1 & 1 \\ 1 &
  -1 \end{pmatrix}$, or the one-qubit rotations of angle
$\theta \in \mathbb{R}$ along the x, y and z axis respectively defined
by
$R_x(\theta) = \begin{pmatrix} \cos(\theta) & -i\sin(\theta) \\
  -i\sin(\theta) & \cos(\theta) \end{pmatrix} , R_y(\theta)
= \begin{pmatrix} \cos(\theta) & -\sin(\theta) \\ \sin(\theta) &
  \cos(\theta) \end{pmatrix}, R_z(\theta) = \begin{pmatrix}
  e^{-i\theta/2} & 0 \\ 0 & e^{i\theta/2} \end{pmatrix}$. Fortunately
some subsets of elementary gates have been shown to be ``universal'',
which means that any quantum operator can be implemented by a quantum
circuit containing only gates from this set. Depending on the
technology used, the universal sets available will be different. With
superconducting qubits or linear optics, the natural set of gates is
the set of one-qubit gates $\mathcal{SU}(2)$, also called local gates
because they act on one qubit only, combined with the entangling CNOT
(Controlled-NOT) gate which is a NOT gate controlled by one qubit. For
instance, in the 2-qubits case, the CNOT gate controlled by the first
qubit and applied to the second one can be represented by the matrix
$\begin{pmatrix} 1 & 0 & 0 & 0 \\ 0 & 1 & 0 & 0 \\ 0 & 0 & 0 & 1 \\ 0
  & 0 & 1 & 0
\end{pmatrix}$.

In this paper we focus on the technology of trapped ions which uses a different universal set of gates, the available gates are:  
\begin{itemize}
\item local $R_z$ gates, 
\item global $R_x$ gates i.e local $R_x$ are applied to every qubit with the same angle,
\item the entangling Mølmer–Sørensen gate (MS gate) defined by 
\[ MS(\theta) = e^{-i \theta (\sum_{i=1}^n \sigma^i_x)^2/4}. \]
where $\sigma^i_x$ is the operator $X$ applied to the $i$-th qubit. 
\end{itemize}

Any quantum operator has an abstract representation using a unitary
matrix and it is essential to be able to switch from a quantum
operator given by a quantum circuit to a quantum operator given by its
unitary matrix and vice-versa. From a quantum circuit to a unitary
matrix this is the problem of the {\it simulation of a quantum
  circuit}. Finding a quantum circuit implementing a unitary matrix is
the problem of the {\it synthesis of a quantum circuit} which is the
central concern of this paper. We distinguish two different synthesis
problems: the first one consists in implementing a complete unitary
matrix and the second one consists in preparing a specific quantum
state as output of the circuit applied to the state $\ket{000...00}$:
this is the {\it state preparation problem}.

\subsection{State of the art in quantum circuit synthesis}
Most quantum algorithms are still designed by hand \cite{Grover:1996:FQM:237814.237866,doi:10.1137/S0036144598347011} even though some circuits have been found via automatic processes \cite{li2016approximate}. For the automatic synthesis of quantum circuits there are mainly algebraic methods: we can mention the Quantum Shannon Decomposition (QSD) that gives the lowest number of gates in the general case \cite{1629135} or the use of Householder matrices to achieve the synthesis in a much lower time~\cite{Householder}.
 The (H,T) framework is used for one-qubit synthesis \cite{kliuchnikov2013fast} although it can be extended to the multi-qubits case \cite{PhysRevA.87.032332}. For the particular case of state preparation - the synthesis of one column of the quantum operator -  we have devices using multiplexors \cite{mottonen2005decompositions} or Schmidt decomposition \cite{PhysRevA.83.032302}. 

 Although the asymptotic complexity of the methods has significantly
 decreased with the years, some progress can still be made. The
 motivation behind this article is to use well-known numerical
 optimization methods in hope of reducing at its maximum the final
 quantum resources. Using heuristics or classical optimization methods
 to synthesize circuits is not new. The BFGS (Broyden-Fletcher-Goldfarb-Shanno) algorithm
 \cite{wright1999numerical} has already been used to synthesize
 trapped-ions circuits~\cite{martinez2016compiling} and machine
 learning techniques have been used in the case of photonic computers
 \cite{arrazola2018machine}. Genetic algorithms have also been used in
 a general context \cite{lukac2003evolutionary} or for the specific
 IBM quantum computer \cite{li2016approximate}. However, these works are purely
   experimental: the overall optimality of their solution is not
   discussed.

 We tackle the problem of the optimality of
   the solution
 by building on the work in \cite{shende2004minimal} that provides a theoretical lower bound of the number of entangling gates necessary in the quantum circuit synthesis problem. The idea is to count the number of degrees of freedom in a quantum circuit and show that this number has to exceed a certain threshold to be sure that an exact synthesis is possible for any operator. To our knowledge numerical methods have not been used in order to address the issue of the lower bound and the more general issue of the minimum quantum resources that are necessary to synthesize a quantum circuit.

\section{Lower bounds for the synthesis of trapped-ions quantum circuits}\label{lower}

The problem of computing the minimal number of quantum gates necessary
to implement an operator remains
open. Knill~\cite{knill1995approximation} showed that for the entire
set of quantum operators (up to a zero measure set) we need an
exponential number of gates and using a polynomial number of gates is
as efficient as using a random circuit in the general case. The
special case of circuits built from $\{\mathcal{SU}(2), CNOT \}$ has
been analyzed in~\cite{shende2004minimal}, where quantum circuits are
modeled as instantiations of circuit topologies consisting of constant
gates and parameterized gates. The unspecified parameters are the
degrees of freedom (DOF) of the topology. For instance the circuit
given in Figure \ref{circuit1} can be considered as a topology with at
most 4 degrees of freedom (one for each rotation) and giving precise
angles to the rotations is an instantiation of the topology.  As a
consequence a topology with $k$ degrees of freedom can be represented by a smooth function
\begin{align} f: \mathbb{R}^k \to \mathcal{U}(2^n) \end{align} 
that maps the values of angles to the space of unitary matrices of size $2^n$.

We are interested in the image of the function $f$. 
If a topology $f$ on $n$ qubits can implement any n-qubits operator, i.e., for any operator $U$ on $n$ qubits there exists a vector of angles $x$ such that $f(x) = U$, then we say that the topology is universal. Now what is the minimum number of gates necessary to obtain a universal topology ? The best lower bound for this minimum in the case of $\{CNOT, \mathcal{SU}(2)\}$ circuits is given in \cite{shende2004minimal}. In this section, we derive a lower bound in the case of trapped-ions circuits using a similar reasoning.

\begin{theorem}
A topology composed of one-qubit rotations and parameterized MS gates cannot be universal with fewer than $\ceil[\big]{\frac{4^n - 3n - 1}{2n+1}}$ MS gates.
\end{theorem}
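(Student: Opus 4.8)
\medskip

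The plan is to adapt the degrees-of-freedom (DOF) counting of \cite{shende2004minimal} to the trapped-ions gate set. First I would argue that any topology built from local $R_z$ gates, global $R_x$ gates and parameterized MS gates can, after grouping consecutive single-qubit gates, be put in the alternating normal form
\[
 L_k \cdot MS(\theta_k) \cdot L_{k-1} \cdot MS(\theta_{k-1}) \cdots L_1 \cdot MS(\theta_1) \cdot L_0 ,
\]
where $k$ is the number of MS gates and each $L_j$ is a product of single-qubit gates, hence an element of $\mathcal{SU}(2)^{\otimes n}$ (every $R_z$ and every $R_x$ has determinant $1$). Replacing each $L_j$ by an \emph{arbitrary} element of $\mathcal{SU}(2)^{\otimes n}$ can only enlarge the image of the topology, so it suffices to lower-bound the number of MS gates needed for this enlarged family: if it is parameterized by $m$ real angles and is universal, then $m$ must be large.

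The crucial observation is that $MS(\theta)=e^{-i\theta(\sum_i\sigma^i_x)^2/4}$ commutes with every product of single-qubit $x$-rotations $\bigotimes_i R_x(\phi_i)=e^{-i\sum_i\phi_i\sigma^i_x/2}$, because the operators $\{\sigma^i_x\}_i$ pairwise commute. Using the Euler decomposition $u=R_x(\alpha)R_z(\beta)R_x(\gamma)$ of each single-qubit gate, I would write $L_j=\mathcal{X}_j\,\mathcal{Z}_j\,\mathcal{X}'_j$, where $\mathcal{X}_j,\mathcal{X}'_j$ are layers of single-qubit $R_x$ rotations and $\mathcal{Z}_j$ a layer of single-qubit $R_z$ rotations, each layer carrying $n$ real parameters. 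Pushing the inner layer $\mathcal{X}'_j$ (for $1\le j\le k$) to the right through the adjacent $MS(\theta_j)$ and absorbing it into the leftmost $x$-layer of $L_{j-1}$, the circuit collapses to a form in which the initial block $L_0$ still carries $3n$ parameters, each of the other $k$ local blocks carries only $2n$, and each MS gate carries $1$. This yields $m\le 3n+k(2n+1)$.

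To conclude, I would invoke Sard's theorem: the image of a smooth map $f:\mathbb{R}^m\to\mathcal{U}(2^n)$ has measure zero unless $m$ is at least the dimension of the target manifold it is supposed to cover. Since a \emph{universal} topology must realize every unitary up to a global phase, its image must have dimension at least $\dim\mathcal{SU}(2^n)=4^n-1$, forcing $m\ge 4^n-1$. Combining with the previous paragraph, $3n+k(2n+1)\ge 4^n-1$, i.e. $k(2n+1)\ge 4^n-3n-1$; since $k$ is a non-negative integer this gives $k\ge\ceil[\big]{\frac{4^n-3n-1}{2n+1}}$, which is the claimed bound.

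I expect the main obstacle to be the normal-form reduction of the second paragraph: one must justify carefully (i) that the topology can be put in the alternating MS/local-layer form, (ii) the commutation of MS gates with arbitrary local $R_x$-layers, and (iii) that the push-and-merge bookkeeping shaves off \emph{exactly} $n$ parameters from each inter-MS layer, so that the effective count per MS gate is $2n+1$ rather than $3n+1$. By contrast the surjectivity/dimension step is routine, though some care is needed about global phase — and about the determinant, which the MS gate can change — to be sure that the relevant target dimension is $4^n-1$ and not $4^n$.
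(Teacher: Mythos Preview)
Your proposal is correct and follows essentially the same strategy as the paper: a Sard-type dimension argument combined with a normal-form reduction that shows each MS gate contributes at most $2n+1$ effective parameters. The only cosmetic difference is the commutation step: you push $R_x$-layers directly through $MS(\theta)$ (using that $MS$ is a function of $\sum_i\sigma^i_x$), whereas the paper first conjugates by $H^{\otimes n}$ to diagonalize $MS$ and then pushes $R_z$-layers through the resulting diagonal gate---these are the same trick up to the basis change $H\sigma_x H=\sigma_z$. Likewise your bookkeeping with target dimension $4^n-1$ (no global phase) versus the paper's $4^n$ (with an extra global-phase parameter) yields the identical inequality $(2n+1)k\ge 4^n-3n-1$.
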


\begin{proof}

First we use Sard's theorem~\cite{guillemin2010differential} to claim that the image of $f$ is of measure 0 if $k = \#DOF < dim(\mathcal{U}(2^n))$. Hence to be sure that we can potentially cover the whole unitary space we need

\begin{align} \#DOF \geq dim(\mathcal{U}(2^n)) = 4^n \label{sard} \end{align}

Next we give a normal form to trapped-ion circuits in order to count the number of DOF. MS gates operate on all qubits, they are diagonal in the basis $H^{\otimes n} = \bigotimes_{i=1}^{n} H$ obtained by applying an Hadamard gate to each qubit, the so-called "$\ket{+}/\ket{-}$" basis. We have 
\begin{equation} MS(\theta) = H^{\otimes n} \times D(\theta) \times H^{\otimes n} \label{MS_diag} \end{equation}
with 
\begin{equation} D(\theta) = \text{diag}([e^{(n - \textit{Hamm}(i))^2 \times \theta}]_{i=0..2^n-1}) \label{diag} \end{equation}
where $\textit{Hamm}$ is the Hamming weight in the binary alphabet. 

First we can merge the Hadamard gates appearing in Equation (\ref{MS_diag}) with the local gates so that we can consider that our circuits are only composed of local gates and diagonal gates given by Equation~(\ref{diag}). Then  we can write each local gate $U$ as 
\begin{equation} U = R_z(\alpha) \times R_x(-\pi/2) \times R_z(\beta) \times R_x(\pi/2) \times R_z(\gamma) \label{one-qubit-decomposition} \end{equation}
where $\alpha, \beta, \gamma$ parameterize the unitary matrix $U$. Because the MS gates are now diagonal we can commute the first $R_z$ so that it merges with the next local unitary matrices. By doing this until we reach the end of the circuit we finally get a quantum circuit for trapped-ions hardware with the following basic subcircuit: 
\begin{itemize}
	\item a layer of local $R_z$ gates,
	\item a layer of global $R_x(\pi/2)$ gates,
	\item a layer of local $R_z$ gates,
	\item a layer of global $R_x(-\pi/2)$ gates,
	\item an MS gate (given in its diagonal form)
\end{itemize}

\begin{figure*}
  \newcommand{\gate}[4]{\draw[fill=white] (#1-#3,#2-#3) rectangle (#1+#3,#2+#3); \node at (#1,#2) {\scalebox{.8}{#4}} } 
  \newcommand{\gatel}[5]{\draw[fill=white] (#1-#4,#2-#3) rectangle (#1+#4,#2+#3); \node at (#1,#2) {\scalebox{.8}{#5}} }
  \newcommand{\gatesl}[2]{\gate{#1}{#2}{.6}{$R_z$} ;
    \gatel{#1+2}{#2}{.6}{1}{$R_x\!(\frac{\pi}2)$} ;
    \gate{#1+4}{#2}{.6}{$R_z$} ;
    \gatel{#1+6}{#2}{.6}{1}{$R_x\!(\frac{-\pi}2)$}}
  \newcommand{\gatesarray}[2]{\gatesl{#1}{#2};\gatesl{#1}{#2+1.5};\gatesl{#1}{#2+3}}
  \newcommand{\msbox}[2]{\draw[fill=white] (#1-.9,#2-2.1) rectangle (#1+.9,#2+2.1);
    \node at (#1,#2) {\scalebox{.8}{$\mathit{MS}$}}}
  \centering
  \begin{tikzpicture}[x=3ex,y=-3ex]
    \draw (0,1) -- (30,1) ;
    \draw (0,2.5) -- (30,2.5) ;
    \draw (0,4) -- (30,4) ;
    \gatesarray{1}{1};
    \gatesarray{11}{1};
    \gatesarray{21}{1};
    \gate{29}{1}{.6}{$R_z$} ;
    \gate{29}{2.5}{.6}{$R_z$} ;
    \gate{29}{4}{.6}{$R_z$} ;
    \msbox{9.2}{2.5} ;
    \msbox{19.2}{2.5} ;
  \end{tikzpicture}
  \caption{Generic quantum circuit on 3 qubits for the trapped-ions technology}
\label{MS_circuit}
\end{figure*}

This subcircuit is repeated $k$ times for a circuit with $k$ MS gates. Ultimately, the circuit ends with a layer of rotations following the decomposition (\ref{one-qubit-decomposition}) on each qubit. An example of a quantum circuit on 3 qubits with 2 MS gates is given in Figure \ref{MS_circuit}. The angles of the $R_z$ rotations are omitted for clarity. The only parameters of such generic circuits are: 
\begin{itemize}
	\item the angles of the $R_z$ rotations,
	\item the number and the angles of the MS gates.
\end{itemize}
Each elementary rotation (around the $x,y,z$ axis) is parameterized by one angle so it can only bring one additional degree of freedom, as the MS gates if they are parameterized. Including the global phase, a circuit containing $k$ parameterized MS gates can have at most $(2n + 1) \times k + 3n + 1$ DOF. In the example given figure \ref{MS_circuit} the topology has $24$ DOF. To reach universality we must verify equation (\ref{sard}), which leads to the lower bound
\begin{equation*} \#MS \geq \ceil[\bigg]{\frac{4^n - 3n - 1}{2n+1}}. \end{equation*}

\end{proof}

This proof easily transposes to the state preparation problem with a few changes: 
\begin{itemize}
	\item a quantum state is completely characterized by $2^{n+1} - 2$ real parameters,
	\item starting from the state $\ket{0}^{\otimes n}$ we can only add one degree of freedom to each qubit on the first rotations because the first $R_z$ result in an unnecessary global phase.
\end{itemize}

Consequently the total number of DOF a topology on $n$ qubits with $k$ MS gates can have is at most $(2n+1)k + 2n$. We get the lower bound
\begin{equation*} \#MS \geq \ceil[\bigg]{\frac{2^{n+1} - 2n - 2}{2n+1}}. \end{equation*}

To our knowledge this is the first calculus of a lower bound in the context of trapped-ions circuits. In the
next section we propose an algorithm based on numerical optimization
that achieves accurate circuit synthesis using a number of gates
corresponding to the above lower bounds. This gives a good indication
of the tightness of these bounds.

\section{The optimization framework}\label{framework}

Given a function $f$ representing a topology on $n$ qubits, finding the best possible synthesis of a unitary matrix $U$ with the topology $f$ can be reformulated as solving 
\begin{equation*} \arg \; \underset{x}{\min} \; \| f(x) - U \| := \arg \; \underset{x}{\min} \; g(x), \end{equation*}
where $\| \cdot \|$ is an appropriate norm. We decompose the cost function $g$ as 
\[ g(x) = \frac{1}{k} \sum_{i=1}^{k} \left\|  f(x) \ket{e_{\phi(i)}} - u_{\phi(i)} \right\|,\] 
where $\phi$ is a permutation of $\llbracket 1, 2^n \rrbracket$ and $u_j$ denotes the $j$-th column of $U$. In other words we generalize our problem to the synthesis of $k$ given columns of our operator $U$. For simplicity now we can only study the case where $k=1$: this is the state preparation problem.

Our goal is to rely on various algorithms for non linear optimization.
We choose the norm to be the Euclidean norm
so that with this choice of norm we have a simple expression of the cost error: 

\begin{equation}\label{Eq:error} g(x) = 2 \times \left(1 - \Re\left(\bra{0}f(x)^{\dag}\ket{\psi}\right)\right). \end{equation} 

Hence the cost to compute the error function is equivalent to the simulation cost of the circuit. Starting from the state $\ket{\psi}$ we simulate the circuit $f(x)^{\dag}$, then the real part of the first component gives the error. Many methods for simulating a quantum circuit have been designed and we can rely on them to perform efficient calculations. 

Since $g$ is $C^{\infty}$ we can use more performant optimization algorithms if we can compute the gradient and if possible the Hessian. To compute the gradient, we need to give a more explicit formula for $g(x)$. 
For some $j$ we have
\begin{equation*} \frac{\partial}{\partial x_j} g(x) = -2\frac{\partial}{\partial x_j}\Re\left(\bra{0}f(x)^{\dag}\ket{\psi}\right),\end{equation*}
and by linearity we get 
\begin{equation*} \frac{\partial}{\partial x_j} g(x) = -2\Re\left(\frac{\partial}{\partial x_j}\bra{0}f(x)^{\dag}\ket{\psi}\right).\end{equation*}

Let us suppose we have $K$ gates in our circuit. Then we can write 
\[ f(x)^{\dag} = \prod_{i=1}^K A_i \]
where $(A_i)_{i=1..K}$ is the set of gates on $n$ qubits that compose the circuit $f$. 
In all circuits encountered the parameterized gates are of the form $e^{i\theta \Omega}$ where $\Omega$ can be either $X, Y, Z$ (tensored with the identity if necessary) or more complex hermitians (see the MS gate for instance), $\theta$ is the parameter of the gate. We assume that two gates are not parameterized by the same variable. Let $k_1$ be the index of the gate depending on $x_j$, then we have

\begin{align*} \frac{\partial}{\partial x_j}\bra{0}f(x)^{\dag}\ket{\psi} & = \frac{\partial}{\partial x_j}\bra{0}\prod_{i=1}^{k_1-1} A_i \times e^{i \times x_j \times \Omega} \times \prod_{i=k_1+1}^{K} A_i \ket{\psi} \\ & = \bra{0}\prod_{i=1}^{k_1-1} A_i \times \frac{\partial}{\partial x_j} e^{i \times x_j \times \Omega} \times \prod_{i=k_1+1}^{K} A_i \ket{\psi} \\ & = i \times \bra{0}\prod_{i=1}^{k_1-1} A_i \times \Omega \times A_{k_1} \times \prod_{i=k_1+1}^{K} A_i \ket{\psi} \end{align*} 

Therefore we have a simple expression for the partial derivatives 
\begin{equation*} \frac{\partial}{\partial x_j} g(x) = 2\Im\left( \bra{0}\prod_{i=1}^{k_1-1} A_i \times \Omega \times A_{k_1} \times \prod_{i=k_1+1}^{K} A_i \ket{\psi} \right). \end{equation*}

To compute the whole gradient vector we propose the following algorithm:
\begin{algorithm}
\caption{Computing the gradient of the cost function}
\label{PseudoCode}
\begin{algorithmic} 
\REQUIRE $n > 0, \ket{\psi} \in \mathbb{C}^{2^n}, f : \mathbb{R}^k \to \mathcal{U}(2^n)$, 
\ENSURE Computes $\partial f$
\STATE $\bra{\psi_1} \leftarrow \bra{0}$
\STATE $\ket{\psi_2} \leftarrow f(x)^{\dag} \ket{\psi}$
\STATE $m \leftarrow 1$
\STATE // $N$ is the total number of gates in the circuit
\FOR{$i = 1, N$}
\STATE // $A_i$ is the $i$-th gate in the circuit
\IF{$A_i$ is parameterized}
\STATE // $H_i$ is the Hamiltonian associated to the gate $A_i$
\STATE $df[m] \leftarrow 2\Im\left(\bra{\psi_1} H_i \ket{\psi_2}\right)$
\STATE $m \leftarrow m+1$
\ENDIF
\STATE $\bra{\psi_1} \leftarrow \bra{\psi_1}A_i$
\STATE $\ket{\psi_2} \leftarrow A_i^{\dag} \ket{\psi_2}$
\ENDFOR
\end{algorithmic}
\end{algorithm}

Therefore computing the gradient is equivalent to simulating two times the circuit. On total we need 3 circuit simulations to compute the error and the gradient. 

Note that we need to store two vectors when computing the gradient. If $k > 2^{n-1}$ then we use more memory than if we have only stored the entire matrix.
As memory is not a big concern here compared to the computational time, we keep this framework even for $k=2^n$ for example.

\section{Numerical experiments}\label{experiments}

The experiments have been carried out on one node of the QLM (Quantum Learning Machine) located at ATOS/BULL. This node is a 24-core Intel Xeon(R) E7-8890 v4 processor at 2.4 GHz. Our algorithm is implemented in C with a Python interface and has been executed on 12 cores using OpenMP multithreading.
The uniform random unitary matrices are generated according to the Haar's measure~\cite{doi:10.1137/0717034}. For the numerical optimization we use the BFGS algorithm provided in the SciPy~\cite{scipy} package.

\subsection{Synthesizing generic circuits}

A clear asset of trapped-ion technology is that there is no notion of topology. Contrary to superconducting circuits where we have to deal with the placement of the CNOT gates, we just have to minimize the number of MS gates to optimize the entangling resources. Local rotations are less expensive to realize experimentally~\cite{martinez2016compiling}. So, as a first approach, we can always apply one-qubit gates on every qubits between two MS gates such that a quantum circuit for trapped ions always follow a precise layer decomposition. 

For 50 random unitary matrices on $k \in \{2,3,4\}$ qubits, and
quantum states on $k$ ranging in $\{2,3,4,5,6,7\}$ qubits, we execute Algorithm~\ref{PseudoCode} with circuits containing various numbers of parameterized MS gates. The stopping criterion for the optimization is the one chosen in SciPy i.e., the norm of the gradient must be below $10^{-5}$. We repeat the optimization process several times per matrix with different starting points to maximize our chance to reach a global minimum.
Then we plot, for various numbers of MS gates, 
\begin{itemize}
	\item the error expressed in Formula~(\ref{Eq:error}), maximized over the sample of matrices,
	\item the number of iterations needed for convergence, averaged over the sample of matrices.
\end{itemize}
We summarize our results in Figures \ref{4qubits} and \ref{7qubits} corresponding respectively to the circuit synthesis problem on 4 qubits and the state preparation problem on 7 qubits.
The results obtained for lower number of qubits are similar.
The amount of time required to perform such experiments for larger problems is too important (more than a day).

For both graphs, we observe an exponential decrease of the error with the number of MS gates. This strong decrease shows that there is a range of MS gates count for which we have an acceptable accuracy without being minimal. Although we can save only a small number of MS gates by this trade, we conjecture that for a given precision the number of saved MS gates will increase with the number of qubits. In other words if we want to synthesize an operator up to a given error, the range of ``acceptable'' number of MS gates will increase with the number of qubits.

  \def\setplot#1#2#3{\node [#3] at #2 {#1}; \def\lab{#1}\def\r{#2}\def\pc{#3}}
  \def\plotdata#1{\node [\pc] at #1 {\lab}; \draw[thick, dotted, color=\pc] \r -- #1; \def\r{#1}}

\begin{figure}
  \centering
\subfloat[4-qubits Quantum circuit synthesis problem.\label{4qubits}]{
  \scalebox{.64}{\begin{tikzpicture}[x=2mm,y=.2mm]
    \draw[xstep=5,ystep=44,gray!50,very thin] (-2,-10) grid (40,360);
    \draw[thick,-] (-2,-10) -- node [anchor=north, inner sep=7mm] {Number of MS gates} (40,-10);
    \draw[thick,-] (-2,-10) -- node[anchor=east, inner sep=9mm] {\rotatebox{90}{Synthesis error}} (-2,360);
    \draw[thick,-] (40,-10) -- node[anchor=west, inner sep=9mm] {\rotatebox{90}{\# Iterations}} (40,360);
    \draw[color=red, very thick,-] (27,-10) -- (27,360);
    \foreach \x in {0,5,10,15,20,25,30,35}
    \draw[very thick] (\x,-12) -- (\x,-8) node[anchor=north] {\scalebox{.8}{$\x$}};
    \foreach \x in {0,1,2,3,4,5,6}
    \pgfmathtruncatemacro\result{200*\x}
    \pgfmathtruncatemacro\pos{52*\x-10}
    \draw[very thick] (39.8,\pos) -- (40.2,\pos) node[anchor=west] {\scalebox{.8}{$\result$}};
    \foreach \x in {0,1,2,3,4,5,6,7,8}
    \pgfmathtruncatemacro\y{44*\x}    
    \pgfmathparse{0.2*\x} \let \z \pgfmathresult
    \draw[very thick] (-1.8,\y) -- (-2.2,\y) node[anchor=east] {\scalebox{.8}{$\pgfmathprintnumber[fixed,precision=1]{\z}$}} ;
    \setplot{+}{(0,0)}{blue}
    \plotdata{(1,9)}\plotdata{(3,23)}\plotdata{(6,42)}
    \plotdata{(9,62)}\plotdata{(12,82)}\plotdata{(15,112)}
    \plotdata{(18,146)}\plotdata{(21,196)}\plotdata{(24,265)}\plotdata{(26,336)}
    \plotdata{(27,329)}\plotdata{(28,235)}\plotdata{(29,179)}\plotdata{(30,144)}
    \plotdata{(31,116)}\plotdata{(32,94)}\plotdata{(33,76)}\plotdata{(34,64)}
    \plotdata{(35,54)}\plotdata{(36,46)}\plotdata{(37,42)}\plotdata{(38,38)}
    \setplot{$\star$}{(0,336)}{black}\plotdata{(1,292)}\plotdata{(3,227)}
    \plotdata{(6,146)}\plotdata{(9,91)}\plotdata{(12,50)}\plotdata{(15,25)}
    \plotdata{(18,10)}\plotdata{(21,3)}\plotdata{(24,1)}\plotdata{(26,0)}
    \plotdata{(27,0)}\plotdata{(28,0)}\plotdata{(29,0)}\plotdata{(30,0)}
    \plotdata{(31,0)}\plotdata{(32,0)}\plotdata{(33,0)}\plotdata{(34,0)}
    \plotdata{(35,0)}\plotdata{(36,0)}\plotdata{(37,0)}\plotdata{(38,0)}
    \draw[fill=white] (6,355) rectangle (21,297) ;
    \node at (6,358) [anchor=north west]
    {\scalebox{.7}{\begin{minipage}{4cm}\begin{tabular}{@{}ll@{}}
            \textcolor{blue}{${\cdot}{\cdot}{\cdot}$+${\cdot}{\cdot}{\cdot}$} & Error
            \\[-.2ex]
            \textcolor{black}{${\cdot}{\cdot}{\cdot}{\star}{\cdot}{\cdot}{\cdot}$} & \# Iterations
            \\[-.2ex]
            \textcolor{red}{\raisebox{.5ex}{\rule{5ex}{1.5pt}}} & Lower bound
          \end{tabular}
        \end{minipage}}} ;
  \end{tikzpicture}}
}\hfill
\subfloat[7-qubits State preparation problem.\label{7qubits}]{
\scalebox{.64}{\begin{tikzpicture}[x=2.4118mm,y=.201mm]
    \draw[xstep=5,ystep=60,gray!50,very thin] (-2,-10) grid (32,358);
    \draw[thick,-] (-2,-10) -- node [anchor=north, inner sep=7mm] {Number of MS gates} (32,-10);
    \draw[thick,-] (-2,-10) -- node[anchor=east, inner sep=9mm] {\rotatebox{90}{Synthesis error}} (-2,358);
    \draw[thick,-] (32,-10) -- node[anchor=west, inner sep=9mm] {\rotatebox{90}{\# Iterations}} (32,358);
    \draw[color=red, very thick,-] (16,-10) -- (16,358);
    \foreach \x in {0,5,10,15,20,25,30}
    \draw[very thick] (\x,-12) -- (\x,-8) node[anchor=north] {\scalebox{.8}{$\x$}};
    \foreach \x in {0,1,2,3,4,5,6,7}
    \pgfmathtruncatemacro\result{200*\x}
    \pgfmathtruncatemacro\pos{46.15*\x-4}
    \draw[very thick] (31.8,\pos) -- (32.2,\pos) node[anchor=west] {\scalebox{.8}{$\result$}};
    \foreach \x in {0,1,2,3,4,5}
    \pgfmathtruncatemacro\y{60*\x}    
    \pgfmathparse{0.05*\x} \let \z \pgfmathresult
    \draw[very thick] (-1.8,\y) -- (-2.2,\y) node[anchor=east] {\scalebox{.8}{$\pgfmathprintnumber[fixed,precision=2]{\z}$}} ;
    \setplot{+}{(0,0)}{blue}\plotdata{(2,78)}\plotdata{(4,116)}\plotdata{(6,140)}\plotdata{(8,160)}
\plotdata{(10,178)}\plotdata{(12,207)}\plotdata{(14,267)}\plotdata{(15,308)}\plotdata{(16,336)}\plotdata{(17,202)}
\plotdata{(18,136)}\plotdata{(19,98)}\plotdata{(20,72)}\plotdata{(21,53)}\plotdata{(22,46)}\plotdata{(23,42)}
\plotdata{(24,38)}\plotdata{(25,30)}\plotdata{(26,26)}\plotdata{(27,25)}\plotdata{(28,24)}\plotdata{(29,23)}
    \setplot{$\star$}{(0,338)}{black}
\plotdata{(2,242)}\plotdata{(4,158)}\plotdata{(6,97)}\plotdata{(8,47)}\plotdata{(10,22)}\plotdata{(12,9)}\plotdata{(14,2)}
\plotdata{(15,0)}\plotdata{(16,0)}\plotdata{(17,0)}\plotdata{(18,0)}\plotdata{(19,0)}\plotdata{(20,0)}\plotdata{(21,0)}
\plotdata{(22,0)}\plotdata{(23,0)}\plotdata{(24,0)}\plotdata{(25,0)}\plotdata{(26,0)}\plotdata{(27,0)}\plotdata{(28,0)}\plotdata{(29,0)}
    \draw[fill=white] (18,335) rectangle (31,277) ;
    \node at (18,338) [anchor=north west]
    {\scalebox{.7}{\begin{minipage}{4cm}\begin{tabular}{@{}ll@{}}
            \textcolor{blue}{${\cdot}{\cdot}{\cdot}$+${\cdot}{\cdot}{\cdot}$} & Error
            \\[-.2ex]
            \textcolor{black}{${\cdot}{\cdot}{\cdot}{\star}{\cdot}{\cdot}{\cdot}$} & \# Iterations
            \\[-.2ex]
            \textcolor{red}{\raisebox{.5ex}{\rule{5ex}{1.5pt}}} & Lower bound
          \end{tabular}
        \end{minipage}}} ;
  \end{tikzpicture}}}
\caption{Evolution of the synthesis error/number of iterations with the number of MS gates.}
\label{Error}
\end{figure}
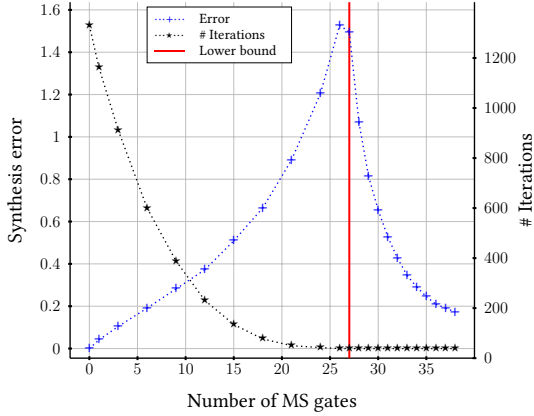
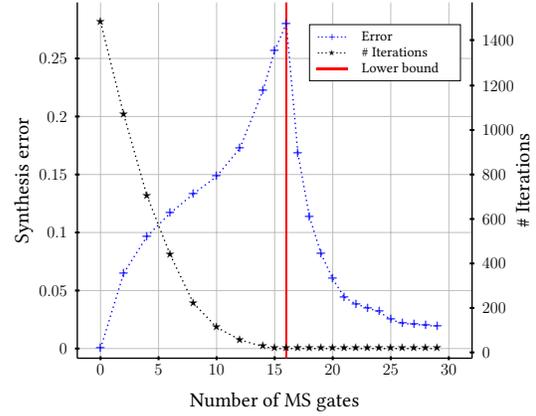

What is the experimental lower bound ? Interestingly, the number of
iterations is a better visual indicator of this lower bound : once the
exact number of DOF is reached, we add redundant degrees of freedom
into the optimization problem. We make the
  hypothesis that this leads to a bigger search space but with many
more global minima. Hence the search is facilitated and the algorithm
can converge in fewer iterations.
So the peak of the number of iterations corresponds to the experimental lower bound, which is confirmed by the computed errors. 

The experimental lower bounds follow the formulas given in Section \ref{lower} except for the 2-qubits quantum circuit synthesis case where we need 1 more MS gate to reach universality. This gives strong indications about the correctness of the theoretical lower bounds and the relevance of the approach of counting the degrees of freedom to estimate the resources necessary for implementing a quantum operator. 

\subsection{Tradeoff quantum/classical cost}

In the previous experiments, we could synthesize unitary operators on 6 qubits in 7 hours and quantum states on 11 qubits in 13 hours, both with circuits of optimal size. 
In the graphs plotted in Figures \ref{4qubits} and \ref{7qubits}, we also observe an exponential decrease of the number of iterations after we have reached the optimal number of MS gates.
We can exploit this behavior if we want to accelerate the time of the synthesis at the price of a bigger circuit. 
By adding only a few more MS gates the number of iterations can be significantly reduced, allowing us to address potentially bigger problems in the synthesis. 
In Figure \ref{Scal} we show, for different problem sizes, how the iteration count and the time per iteration evolve when we add $10\%$ more MS gates to the optimal number of gates.
We observe that the number of iterations is reduced at most by a factor 5 in the case of 6 qubits for generic operators and 11 qubits for quantum states. 
More importantly, with such an augmentation in the number of MS gates, the number of iterations only slightly increases, almost linearly, in contrary to circuits of optimal size where the number of iterations increases exponentially. This means that the time per iteration, also increasing exponentially with the number of qubits, is the main limiting factor in the good scalability of the method. Thus any improvement in the classical simulation of a quantum circuit (i.e., reducing the time per iteration) will lead to significant acceleration in the synthesis time.

Finally in our experiments we achieved circuit synthesis on 6 qubits in about an hour, and a state preparation on 11 qubits in about 3 hours. Despite this sacrifice in the quantum cost (since we use more gates), this is still to our knowledge the best method to synthesize generic quantum circuits and quantum states for the trapped-ions technology.

\begin{figure}
\centering
\makebox[\textwidth][c]
{
\subfloat[Quantum circuit synthesis problem.]{
  \scalebox{.64}{\begin{tikzpicture}[x=1.4cm,y=.2mm]
    \draw[xstep=1,ystep=75,gray!50,very thin] (0,-10) grid (5.5,360);
    \draw[thick,-] (0,-10) -- node [anchor=north, inner sep=7mm] {Number of qubits} (5.5,-10);
    \draw[thick,-] (0,-10) -- node[anchor=east, inner sep=11mm] {\rotatebox{90}{\# Iterations}} (0,360);
    \draw[thick,-] (5.5,-10) -- node[anchor=west, inner sep=11mm]
    {\rotatebox{90}{Time per iteration (s)}} (5.5,360);
    \foreach \x in {2,3,4,5,6}
    \draw[very thick] (\x-1,-12) -- (\x-1,-8) node[anchor=north] {\scalebox{.8}{$\x$}};
    \foreach \x in {0,1,2,3}
    \pgfmathtruncatemacro\result{\x-3}
    \draw[very thick] (5.4,97*\x-5) -- (5.6,97*\x-5) node[anchor=west, inner sep=0.5mm] {\scalebox{.8}{$10^{\result}$}};
    \foreach \x in {0,1,2,3,4}
    \pgfmathtruncatemacro\result{2000*\x}
    \pgfmathtruncatemacro\y{75*\x}    
    \pgfmathparse{0.2*\x} \let \z \pgfmathresult
    \draw[very thick] (0.1,\y) -- (-0.1,\y) node[anchor=east] {\scalebox{.8}{$\result$}} ;
    \setplot{$\bullet$}{(1,2)}{blue}\plotdata{(2,8)}\plotdata{(3,26)}\plotdata{(4,40)}\plotdata{(5,48)} 
    \setplot{$\star$}{(1,2)}{black}\plotdata{(2,11)}\plotdata{(3,49)}\plotdata{(4,129)}\plotdata{(5,304)}
    \setplot{\scalebox{.5}{$\blacksquare$}}{(1,2)}{black}\plotdata{(2,413-387)}\plotdata{(3,413-312)}\plotdata{(4,413-211)}\plotdata{(5,413-83)}
    \setplot{+}{(1,2)}{blue}\plotdata{(2,413-387+1)}\plotdata{(3,413-312+2)}\plotdata{(4,413-211+4)}\plotdata{(5,413-83+6)}
    \draw[fill=white] (0.3,355) rectangle (4.3,280) ;
    \node at (.3,358) [anchor=north west]
    {\scalebox{.7}{\begin{minipage}{10cm}
          \begin{tabular}{@{}ll@{}}
            \textcolor{black}{${\cdot}{\cdot}{\cdot}{\star}{\cdot}{\cdot}{\cdot}$}
            & iter (with min. \# gates)
            \\[-.2ex]
            \textcolor{blue}{${\cdot}{\cdot}{\cdot}{\bullet}{\cdot}{\cdot}{\cdot}$} 
            & \# iter (with min. \# gates + 10\%)
            \\[-.2ex]
            \textcolor{black}{${\cdot}{\cdot}{\cdot}{\scalebox{.5}{$\blacksquare$}}{\cdot}{\cdot}{\cdot}$} 
            & \# time/iter (with min. \# gates)
            \\[-.2ex]
            \textcolor{blue}{${\cdot}{\cdot}{\cdot}${{+}}${\cdot}{\cdot}{\cdot}$} 
            & \# time/iter (with min. \# gates + 10\%)
          \end{tabular}
        \end{minipage}}} ;
  \end{tikzpicture}}
} 
\subfloat[State preparation problem.]{
  \scalebox{.64}{\begin{tikzpicture}[x=7mm,y=.2mm]
    \draw[xstep=1,ystep=56,gray!50,very thin] (0,-10) grid (11.5,360);
    \draw[thick,-] (0,-10) -- node [anchor=north, inner sep=7mm] {Number of qubits} (11.5,-10);
    \draw[thick,-] (0,-10) -- node[anchor=east, inner sep=11mm] {\rotatebox{90}{\# Iterations}} (0,360);
    \draw[thick,-] (11.5,-10) -- node[anchor=west, inner sep=11mm]
    {\rotatebox{90}{Time per iteration (s)}} (11.5,360);
    \foreach \x in {1,2,3,4,5,6,7,8,9,10,11,12}
    \draw[very thick] (\x-1,-12) -- (\x-1,-8) node[anchor=north] {\scalebox{.8}{$\x$}};
    \foreach \x in {0,1,2,3,4}
    \pgfmathtruncatemacro\result{\x-3}
    \draw[very thick] (11.4,72*\x+5) -- (11.6,72*\x+5) node[anchor=west, inner sep=0.5mm] {\scalebox{.8}{$10^{\result}$}};
    \foreach \x in {0,1,2,3,4,5,6}
    \pgfmathtruncatemacro\result{1000*\x}
    \pgfmathtruncatemacro\y{56*\x}    
    \pgfmathparse{0.2*\x} \let \z \pgfmathresult
    \draw[very thick] (0.1,\y) -- (-0.1,\y) node[anchor=east] {\scalebox{.8}{$\result$}} ;
    \setplot{$\bullet$}{(1,2)}{blue}
    \plotdata{(2,412-411)}\plotdata{(3,412-407)}\plotdata{(4,412-395)}\plotdata{(5,412-386)}
    \plotdata{(6,412-378)}\plotdata{(7,412-363)}\plotdata{(8,412-354)}\plotdata{(9,412-368)}\plotdata{(10,412-345)}\plotdata{(11,412-336)}

    \setplot{$\star$}{(1,2)}{black}
    \plotdata{(2,412-407)}\plotdata{(3,412-406)}\plotdata{(4,412-382)}\plotdata{(5,412-368)}
    \plotdata{(6,412-310)}\plotdata{(7,412-333)}\plotdata{(8,412-182)}\plotdata{(9,412-124)}\plotdata{(10,412-77)}

    \setplot{\scalebox{.5}{$\blacksquare$}}{(1,2)}{black}
    \plotdata{(2,413-403)}\plotdata{(3,412-394)}\plotdata{(4,412-369)}\plotdata{(5,412-339)}
    \plotdata{(6,412-301)}\plotdata{(7,412-254)}\plotdata{(8,412-216)}\plotdata{(9,412-168)}\plotdata{(10,412-130)}

    \setplot{+}{(1,2)}{blue}
    \plotdata{(2,412-400)}\plotdata{(3,412-384)}\plotdata{(4,412-363)}\plotdata{(5,412-331)}
    \plotdata{(6,412-297)}\plotdata{(7,412-251)}\plotdata{(8,412-210)}\plotdata{(9,412-158)}\plotdata{(10,412-124)}\plotdata{(11,412-77)}
    \draw[fill=white] (0.3,355) rectangle (8.3,280) ;
    \node at (.3,358) [anchor=north west]
    {\scalebox{.7}{\begin{minipage}{10cm}
          \begin{tabular}{@{}ll@{}}
            \textcolor{black}{${\cdot}{\cdot}{\cdot}{\star}{\cdot}{\cdot}{\cdot}$}
            & iter (with min. \# gates)
            \\[-.2ex]
            \textcolor{blue}{${\cdot}{\cdot}{\cdot}{\bullet}{\cdot}{\cdot}{\cdot}$} 
            & \# iter (with min. \# gates + 10\%)
            \\[-.2ex]
            \textcolor{black}{${\cdot}{\cdot}{\cdot}{\scalebox{.5}{$\blacksquare$}}{\cdot}{\cdot}{\cdot}$} 
            & \# time/iter (with min. \# gates)
            \\[-.2ex]
            \textcolor{blue}{${\cdot}{\cdot}{\cdot}${{+}}${\cdot}{\cdot}{\cdot}$} 
            & \# time/iter (with min. \# gates + 10\%)
          \end{tabular}
        \end{minipage}}} ;
  \end{tikzpicture}}
}
}
\caption{Number of iterations and time per iteration for different problem sizes.}
\label{Scal}
\end{figure}
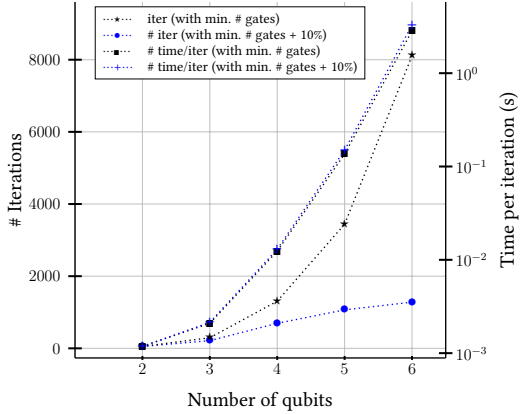
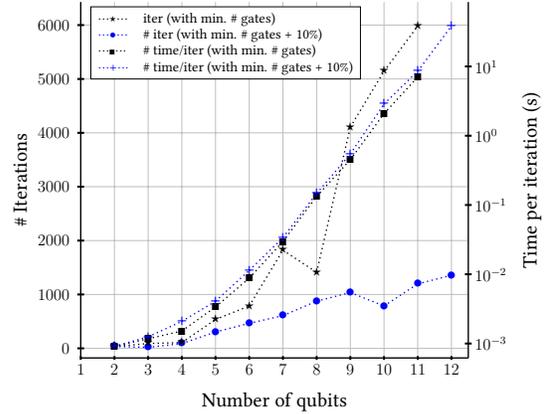

\section{Conclusion}\label{conclusion}

We have explained how a numerical optimization algorithm can be used to synthesize generic quantum circuits adapted to the trapped-ions technology. We have provided a simple algorithm to compute the error and the gradient such that the scalability of our method relies on the simulation cost of the circuits we want to optimize. We have also highlighted a possible tradeoff between the classical time necessary to perform the synthesis and the final size of the circuits resulting in more flexibility in the application of the framework. Finally we have shown that the lower bounds computed experimentally with this framework follow the theoretical results with the hope that it would help for future formal proofs.
 
As future work we plan to extend our analysis to specific quantum operators for which we expect shorter circuits. We also plan to extend this analysis to $\{CNOT, \mathcal{SU}(2)\}$ based circuits, especially in order to answer the following questions : is the lower bound given in \cite{shende2004minimal} a tight one ? Is there a universal topology that reaches that lower bound ? We also investigate a way to efficiently compute the Hessian in order to use more complex optimization methods and we plan to address problems with more qubits thanks to GPU and distributed computing.

\section*{Acknowledgement}
This work was supported in part by the French National Research Agency
(ANR) under the research project SoftQPRO ANR-17-CE25-0009-02,
and by the DGE of the French Ministry of Industry under the research
project PIA-GDN/QuantEx P163746-484124.


\begin{thebibliography}{10}

\bibitem{arrazola2018machine}
J.~M. Arrazola, T.~R. Bromley, J.~Izaac, C.~R. Myers, K.~Bradler, and N.~Killoran.
\newblock Machine learning method for state preparation and gate synthesis on photonic quantum computers.
\newblock {\em Quantum Science and Technology}, 2018.

\bibitem{bravyi2018simulation}
S.~Bravyi, D.~Browne, P.~Calpin, E.~Campbell, D.~Gosset, and M.~Howard.
\newblock Simulation of quantum circuits by low-rank stabilizer decompositions.
\newblock {\em Quantum}, 3:181, 2019.

\bibitem{PhysRevA.87.032332}
B.~Giles and P.~Selinger.
\newblock Exact synthesis of multiqubit clifford+$t$ circuits.
\newblock {\em Physical Review A}, 87:032332, Mar 2013.

\bibitem{Householder}
T.~Goubault~de Brugière, M.~Baboulin, B.~Valiron, and C.~Allouche.
\newblock Quantum circuits synthesis using householder transformations.
\newblock Draft, submitted. 2019.

\bibitem{KRONECKER}
A.~Graham.
\newblock {\em Kronecker products and matrix calculus with application}.
\newblock Wiley, New York, 1981.

\bibitem{Grover:1996:FQM:237814.237866}
L.~K. Grover.
\newblock A fast quantum mechanical algorithm for database search.
\newblock In {\em Proceedings of the Twenty-eighth Annual ACM Symposium on Theory of Computing}, pages 212--219, 1996. ACM.

\bibitem{guillemin2010differential}
V.~Guillemin and A.~Pollack.
\newblock {\em Differential topology}, volume 370.
\newblock 2010.

\bibitem{harrow2009quantum}
A.~W. Harrow, A.~Hassidim, and S.~Lloyd.
\newblock Quantum algorithm for linear systems of equations.
\newblock {\em Physical Review Letters}, 103:150502, 2009.

\bibitem{kerenidis2017recommendation}
I.~Kerenidis and A.~Prakash.
\newblock Quantum recommendation systems.
\newblock In {\em Proceedings of the 8th Conference on Innovations inTheoretical Computer Science Conference}, volume~67 of {\em LIPIcs}, pages
  49:1--49:21. 2017.

\bibitem{kliuchnikov2013fast}
V.~Kliuchnikov, D.~Maslov, and M.~Mosca.
\newblock Fast and efficient exact synthesis of single-qubit unitaries generated by {C}lifford and {T} gates.
\newblock {\em Quantum Information \& Computation}, 13(7-8):607--630, 2013.

\bibitem{knill1995approximation}
E.~Knill.
\newblock Approximation by quantum circuits.
\newblock Technical Report LAUR-95-2225, Los Alamos National Laboratory, 1995.
\newblock Also available as
  \href{https://arxiv.org/abs/quant-ph/9508006}{arXiv:quant-ph/9508006}.

\bibitem{li2016approximate}
R.~Li, U.~Alvarez-Rodriguez, L.~Lamata, and E.~Solano.
\newblock Approximate quantum adders with genetic algorithms: An IBM quantum experience.
\newblock {\em Quantum Measurements and Quantum Metrology}, 4:1--7, 2013.

\bibitem{lukac2003evolutionary}
M.~Lukac, M.~Perkowski, H.~Goi, M.~Pivtoraiko, C.~H. Yu, K.~Chung, H.~Jeech, B.-G. Kim, and Y.-D. Kim.
\newblock Evolutionary approach to quantum and reversible circuits synthesis.
\newblock {\em Artificial Intelligence Review}, 20(3-4):361--417, 2003.

\bibitem{martinez2016compiling}
E.~A. Martinez, T.~Monz, D.~Nigg, P.~Schindler, and R.~Blatt.
\newblock Compiling quantum algorithms for architectures with multi-qubit gates.
\newblock {\em New Journal of Physics}, 18(6):063029, 2016.

\bibitem{mottonen2005decompositions}
M.~Mottonen and J.~J. Vartiainen.
\newblock Decompositions of general quantum gates.
\newblock In S.~Shannon, editor, {\em Trends in Quantum Computing Research}, chapter~7. Nova Science Publishers, New York, 2006.
\newblock Also available online as \href{https://arxiv.org/abs/quant-ph/0504100}{arXiv:quant-ph/0504100}.

\bibitem{nielsen2011quantum}
M.~A. Nielsen and I.~L. Chuang.
\newblock {\em Quantum Computation and Quantum Information}.
\newblock Cambridge University Press, 2011.

\bibitem{PhysRevA.83.032302}
M.~Plesch and i.~c.~v. Brukner.
\newblock Quantum-state preparation with universal gate decompositions.
\newblock {\em Physical Review A}, 83:032302, Mar 2011.

\bibitem{NISQC}
J.~Preskill.
\newblock Quantum {C}omputing in the {NISQ} era and beyond.
\newblock {\em {Quantum}}, 2:79, Aug. 2018.

\bibitem{scipy}
Python.
\newblock {S}cientific {C}omputing {T}ools for {P}ython ({SciPy}).
\newblock \url{https://www.scipy.org}.

\bibitem{1629135}
V.~V. Shende, S.~S. Bullock, and I.~L. Markov.
\newblock Synthesis of quantum-logic circuits.
\newblock {\em IEEE Transactions on Computer-Aided Design of Integrated Circuits and Systems}, 25(6):1000--1010, 2006.

\bibitem{shende2004minimal}
V.~V. Shende, I.~L. Markov, and S.~S. Bullock.
\newblock Minimal universal two-qubit controlled-not-based circuits.
\newblock {\em Physical Review A}, 69(6):062321, 2004.

\bibitem{doi:10.1137/S0036144598347011}
P.~W. Shor.
\newblock Polynomial-time algorithms for prime factorization and discrete
  logarithms on a quantum computer.
\newblock {\em SIAM Review}, 41(2):303--332, 1999.

\bibitem{doi:10.1137/0717034}
G.~Stewart.
\newblock The efficient generation of random orthogonal matrices with an
  application to condition estimators.
\newblock {\em SIAM Journal on Numerical Analysis}, 17(3):403--409, 1980.

\bibitem{PhysRevLett.91.147902}
G.~Vidal.
\newblock Efficient classical simulation of slightly entangled quantum
  computations.
\newblock {\em Physical Review Letters}, 91:147902, Oct 2003.

\bibitem{wright1999numerical}
S.~Wright and J.~Nocedal.
\newblock Numerical optimization.
\newblock {\em Springer Science}, 35(67-68):7, 1999.

\end{thebibliography}

\end{document}